\documentclass[reprint]{revtex4-1}
\usepackage[english]{babel}
\usepackage{amsmath,amsthm,amssymb}
\usepackage{amsfonts}
\usepackage{url}
\usepackage{bbm}
\usepackage[colorlinks=true
,urlcolor=blue
,anchorcolor=blue
,citecolor=blue
,filecolor=blue
,linkcolor=blue
,menucolor=blue
,linktocpage=true
,pdfproducer=medialab
,pdfa=true
]{hyperref}

\newcommand{\be}{\beta}

\newcommand{\al}{\alpha}

\newcommand{\si}{{\sigma}}

\newcommand{\ep}{\varepsilon}
\newcommand{\del}{{\partial}}
\newcommand{\HH}{{\mathcal H}}
\newcommand{\mc}[1]{\mathcal{#1}}
\newcommand{\id}{\mathbbm{1}}
\newcommand{\tr}{\mathrm{tr}}
\newcommand{\half}{\frac{1}{2}}
\newcommand{\nn}{\nonumber}
\newtheorem{theorem}{Theorem}[section]
\newtheorem{lemma}[theorem]{Lemma}

\theoremstyle{remark}
\newcommand{\ket}[1]{\left| #1 \right>}
\newcommand{\bra}[1]{\left< #1 \right|} 
\newcommand{\braket}[2]{\left< #1 \vphantom{#2} \right|
	\left. #2 \vphantom{#1} \right>}

\newcommand{\mrm}{\mathrm}
\newcommand{\nodagger}{{\phantom{\dagger}}}
\newcommand{\psipsi}{\ket{\psi}\!\bra{\psi}}

\begin{document}
	\title{Dynamical fidelity susceptibility of decoherence-free subspaces}
	\author{Joris Kattem\"olle, Jasper van Wezel}
	\affiliation{Institute for Theoretical Physics, University of Amsterdam,
		Science Park 904, Amsterdam, The Netherlands}
	\affiliation{QuSoft, CWI,
		Science Park 123, Amsterdam, The Netherlands}
	
	\begin{abstract}
	In idealized models of a quantum register and its environment, quantum information can be stored indefinitely by encoding it into a decoherence-free subspace (DFS). Nevertheless, perturbations to the idealized register-environment coupling will cause decoherence in any realistic setting. Expanding a measure for state preservation, the dynamical fidelity, in powers of the strength of the perturbations, we prove stability to linear order is a generic property of quantum state evolution. The effect of noise perturbation is quantified by a concise expression for the strength of the quadratic, leading order, which we define as the dynamical fidelity susceptibility of DFSs. Under the physical restriction that noise acts on the register $k$-locally, this susceptibility is bounded from above by a polynomial in the system size. These general results are illustrated by two physically relevant examples. Knowledge of the susceptibility can be used to increase coherence times of future quantum computers.
	\end{abstract}	\maketitle
	
	\section{Introduction}
	Quantum computers may be used to simulate general quantum mechanical systems \cite{feynman1982simulating, lloyd1996universal}. Already they drive an enormous effort in the control of quantum systems \cite{Wineland2013} and spur interest in quantum information theory, with connections to high-energy physics \cite{Almheiri2015, Harlow2013, dvali2017} and the quantum-to-classical transition \cite{zurek2003decoherence}.
	
	The biggest roadblock on the way to scalable quantum computation is that of noise and decoherence \cite{nielsen2010quantum, unruh1995maintaining, palma1996quantum}. Quantum error correction offers solutions to this problem \cite{shor1996fault, lidar2013}. In active quantum error correction, errors have to be detected and corrected, whereas in passive error correction, the strategy is to avoid the errors by encoding. The two forms of error correction can be used in conjunction \cite{lidar1999concatenating, kempe2001theory}, and can be described in the same mathematical framework \cite{kribs2005unified}. 
	
	An important player in the passive category is the decoherence-free subspace (DFS) \cite{palma1996quantum, duan1997preserving, zanardi1997error, zanardi1997noiseless,zanardi1998dissipation, lidar1998decoherence, lidar2003decoherence, kempe2001theory}. Although DFSs have been superseded theoretically by more general notions of passive error correction \cite{blume-kohout2010information}, they remain of interest both in theory and in practice \cite{kielpinski2002architecture, blais2004cavity, zhou2004scalable}. In this technique, symmetries of the register-environment coupling are exploited to store quantum information in a register subspace whose reduced time evolution is purely unitary. In contrast to states outside of a DFS, those in it do not suffer from decoherence. Only register-environment models with enough symmetry allow for DFSs.  
	
	In real systems, there are small deviations from the idealized model of the interaction between the quantum register (the `system') and the environment (the `bath'). In particular, these may lead to `super-decoherence' \cite{monz201114-qubit}, where the decoherence time scales adversely with the system size, even for states in a DFS.  
	Quantifying the sensitivity of DFSs to perturbations has lead to the definition of the `dynamical fidelity' \cite{lidar1998decoherence, bacon1999robustness}.
	
	The dynamical fidelity is a measure for the closeness of two states: ($i$) a state, possibly in a DFS, evolving in time under the original model, and ($ii$) the \emph{same} initial state, evolving under the presence of an additional system-bath interaction whose strength is proportional to $\ep$. At the initial time, the dynamical fidelity equals unity, but as time evolves the two states will start to diverge, decreasing the fidelity. The dynamical fidelity can be seen as a generalization of the Loschmidt echo \cite{gorin2006dynamics} to open quantum systems, and is related to a different fidelity in the context of phase transitions \cite{zanardi2007information, you2007fidelity,gu2010fidelity}. Here the fidelity measures the closeness between the ground states of Hamiltonians with different parameter values. 
	
	In an initial qualitative study  \cite{bacon1999robustness}, it was shown the dynamical fidelity can \emph{only} depend linearly on $\ep$ whenever the unperturbed system evolves unitarily on its own in a nontrivial way. This is so for nondegenerate logical states, or whenever the quantum register is used in a quantum computation. Conversely, there is no term linear in $\ep$ whenever the quantum register does not evolve on its own. This led to the conclusion that DFSs are `robust' or `stable' against perturbations when used as quantum memory, but not when used during a quantum computation \cite{lidar1998decoherence, bacon1999robustness, lidar2003decoherence, altepeter2004experimental, gorin2006dynamics, doll2006limitation, schlosshauer2007decoherence, schlosshauer2014quantum}.
	
	Here we prove there never is a linear dependence on $\ep$. In the parlance of the previous work this means DFSs are also stable when used during a quantum computation. However, we show the result even holds for initial states outside of a DFS. In retrospect, the absence of a linear term in the expansion of the dynamical fidelity is a consequence of its definition. This can be considered positive for DFSs, because it shows states in a DFS do not react more strongly to perturbations than regular states. For the fidelity in the context of phase transitions the absence of a linear term is already known \cite{zanardi2007information, you2007fidelity,gu2010fidelity}. 
	
	We go on to introduce the dynamical fidelity susceptibility of DFSs ($
	\chi$), `susceptibility' for short, defined as the strength of the term in the dynamical fidelity proportional to $\ep^2t^2$ \footnote{A quantity called the `dynamical fidelity susceptibility' was also introduced in \cite{mera2018dynamical}. Despite the name, this is something different; it refers to the fidelity between a thermal state (under some Hamiltonian) and the same state after being time evolved by a perturbed Hamiltonian.}. As the first nontrivial term, the susceptibility quantifies the leading order sensitivity to perturbations of states in a DFS. Surprisingly, it does not depend on the unperturbed Hamiltonian, so the leading order behavior of DFSs is as if there were no unperturbed system-bath interaction. Furthermore, it means our result can be used to study the behavior of \emph{any} state under perturbations, outside of the context of DFSs, as long as the unperturbed system-bath interaction vanishes. In that case the DFS of a quantum register is its entire Hilbert space. Even though physically the leading order in time is the most interesting, we later generalize to include all orders in time for completeness. 
	
	For general perturbations, we show the susceptibility is bounded from above by an exponential in the system size, $\chi=O(2^{4n})$, with $n$ the number of qubits. A DFS for which the susceptibility increases exponentially should be considered nonscalable in any practical sense. However, noise typically possesses a form of locality \cite{feynman1982simulating,lloyd1996universal, preskill2013sufficient}, which enforces a more favorable scaling with $n$. Specifically, we consider perturbing Hamiltonians that act on the system $k$-locally, which means they do not contain products of more than $k$ individual qubit operators. In this definition spatial locality is not demanded. The most commonly considered case is $k=1$, which describes completely local noise \cite{nielsen2010quantum}. For general $k$-local perturbing noise, the susceptibility is shown to be bounded from above by a polynomial, $\chi=O(n^{2k})$. This can be compared to the related result on active error correction by Preskill \cite{preskill2013sufficient}, and impacts the scalability of quantum computation \cite{kalai2011quantum, kalai2016quantum} using DFSs.

	To illustrate these results, we compute the susceptibility of a highly nonclassical state, the GHZ state, in two types of DFS. The first protects against pure collective dephasing, the second additionally against collective emission and absorption. We find 
	$\chi=n^2$ and $\chi=4n/3$, respectively. 
Similar scaling laws were found before \cite{palma1996quantum, breuer2002theory, monz201114-qubit} for non-DFS states, and, in fact, DFSs were designed to prevent such scaling laws. Our work shows that these scaling laws are still present in practice.
	
	\section{The dynamical fidelity} Consider a system $S$ in a bath $B$. In the context of quantum computation, $S$ is the collection of qubits, the quantum register, and $B$ is the environment, such as the electromagnetic field. The overall Hilbert space is $\HH=\HH_S\otimes\HH_B$, where $\HH_S$ ($\HH_B$) is the Hilbert space of the $S$ ($B$). We assume $\HH$ to be finite-dimensional unless stated otherwise. (In particular, infinite baths may arise in the context of Lindbladian time evolution.) In general, the Hamiltonian can be written in the form
	\begin{equation}\label{eq:Hamiltonian}
	H_0=H_S\otimes \id + \id\otimes H_B+H_{SB},
	\end{equation}
	where $H_S$ ($H_B$) acts only on $S$ ($B$) and $H_{SB}$ is a system-bath interaction term. In an ongoing quantum computation, $H_S$ includes the generators of the gates.
	
	Now assume that at $t=0$ we have a product state $\rho_{SB,\mrm{init}}= \ket\Psi \bra\Psi$, with $\ket{\Psi}=\ket{\psi}\otimes\ket{\varphi_0}$. For a nontrivial $H_{SB}$ the Hamiltonian     (\ref{eq:Hamiltonian}) will induce entanglement between $S$ and $B$. Tracing out $B$, the pure system state $\psipsi$ at time $t=0$ will be mapped to a mixed system state at $t>0$ by time evolution. We denote this map, or quantum channel, by $\mc A(t)=\mc A$ \cite{nielsen2010quantum}. For every $t\geq 0$ we have a quantum channel. The system state after time $t$ equals $\rho_S(t)\equiv\mc A(\psipsi)=\tr_B(e^{-itH}\ket{\Psi}\bra{\Psi}e^{itH})$ in units where $\hbar=1$. This can be rewritten by introducing the Kraus operators $A_i(t)\equiv \bra{\varphi_i} e^{-itH}\ket{\varphi_0}$, where $\{\ket{\varphi_i}\}$ forms an orthonormal basis for $\HH_B$, with $\ket{\varphi_0}$ the initial bath state. Since $H_0$ acts on both $\HH_S$ and $\HH_B$, and the $\{\ket{\varphi_i}\}$ are bath states, $A_i(t)$ acts nontrivially on $\HH_S$ only. Thus the operator sum representation (OSR) of $\mc A$ is obtained,
	\begin{equation}\label{eq:KrausMap}
	\mc A(\psipsi)=\sum_iA_i(t) \psipsi A_i^\dagger(t).
	\end{equation}
	Because $\mc A$ is trace-preserving, we have the normalization condition $\sum_iA_i^\dagger(t) A_i^\nodagger(t)=\id$. 
	
	In general, $\mc A$ may map pure states to mixed states. A DFS, on the other hand, is defined as a subspace of $D \subset \HH_S$ for which, despite coupling to the bath via $H$,  $\mc A(\psipsi)=e^{-itH_S}\psipsi e^{itH_S}$ for all $\ket \psi \in D$, where $e^{-itH_S}\ket\psi$ has to remain in $D$ \cite{zanardi1997error,lidar1998decoherence}. Thus pure states in a DFS are mapped to pure states in the same DFS by $\mc A$. In terms of the OSR, a necessary and sufficient condition for $\ket \psi\in D$ is $A_j \ket \psi = g_j e^{-i t H_S} \ket \psi $ for all $j$, where $\sum_j |g_j|^2=1$ \cite{lidar2003decoherence, bacon1999robustness}. We do not assume $\ket \psi \in D$ unless stated otherwise. 

	Consider the perturbation $V$ to the Hamiltonian  $H_0$,
	\begin{equation}\label{eq:perturbedHamiltonian}
	H=H_0+\ep V,
	\end{equation}
	where $\ep$ is a real parameter. (The $\ep$-dependence of $H$ is suppressed.) The system state now also depends on $\ep$, and the OSR of the map induced by $H$ is $\rho_S(\ep,t)\equiv\mc A_\ep(\psipsi)=\sum_iA^{\nodagger}_{i}(\ep,t) \psipsi A^{\dagger}_i(\ep,t)$ with $A_i(\ep,t)=\bra{\varphi_i} e^{-itH}\ket{\varphi_0}$. Since the exponential map is analytic, the Kraus operators of the perturbed map may be expanded around $\ep=0$ as \begin{equation}\label{eq:APrime}
	A_i(\ep,t)=A^{(0)}_i(t)+\ep A_i^{(1)}(t)+\ep^2 A^{(2)}_i(t)+O(\ep^3).
	\end{equation}
	Here $A^{(1)}_i(t)=\bra{\varphi_i} -itV-t^2(H_0V+VH_0)/2+O(t^3)\ket{\varphi_0}$.  The explicit form of $A^{(2)}_i(t)$ is of no interest because it will be eliminated. We do not allow qubits to leave the system, so even the perturbed quantum channel needs to be trace-preserving. Thus $\sum_i A^\dagger_i(\ep,t) A^\nodagger_i(\ep,t)=\id$ for all real $\ep$ and $t$. When the expansion (\ref{eq:APrime}) is substituted this imposes
	\begin{align}
	&\sum_i\left(A^{(0)\dagger}_i(t) A^{(1)}_i(t)+A^{(1)\dagger}_i(t) A^{(0)}_i(t)\right)=0,\label{eq:normalizationOne}\\ &\sum_i\left(A^{(0)\dagger}_i(t) A^{(2)}_i(t) + A^{(1)\dagger}_i(t) A^{(1)}_i(t)\right. \nn \\
	&\hspace{9em} \left.  + 
	A^{(2)\dagger}_i(t) A^{(0)}_i(t)\right)=0. \label{eq:normalizationTwo}
	\end{align}
	Conditions involving higher orders of the expansion can be obtained straightforwardly. The above relations are general, since they put constraints on perturbations to general quantum channels, applicable outside the present context. There are no separate conditions that follow from the complete positivity of $\mc A_\ep$; any map that has OSR is automatically completely positive. If one is interested in the effects of a perturbation of the quantum channel rather than a perturbation of the Hamiltonian, expression (\ref{eq:APrime}) is the starting point. 
	
	In general, the fidelity between two states is defined as  $F(\si,\eta)=\left[\tr\sqrt{\sqrt{\si}\eta\sqrt{\si}}\right]^2$ \cite{jozsa1994fidelity}. The effect of a perturbation on the dynamics may be quantified by the dynamical fidelity $F$, that is, the fidelity between the state as obtained after the unperturbed time evolution and the state after the perturbed time evolution, $
	F\equiv F[\rho_S(0,t),\rho_S(\ep,t)]$. Often it is intractable to compute the fidelity because of the square roots. However, if $\rho_{S,\mrm{init}}=\psipsi$, and $\ket \psi$ in in a DFS such that the state remains pure for all $t>0$, the fidelity simplifies to
	\begin{equation}\label{eq:Fsimple}
	F=\bra{\psi(t)} \rho_S(\ep,t)\ket{\psi(t)}
	\end{equation}
	with $\ket{\psi(t)}=U(t)\ket\psi\equiv e^{-itH_S}\ket\psi$.

	\section{Expansion of the dynamical fidelity} The dynamical fidelity $F$ is analytic in $\ep$ at $\ep=0$ because it is a composition of analytical functions of $\ep$. Even though this might seem evident, a more elaborate proof is given in appendix \ref{sec:fidelity}. Now $F$ may be expanded for small $\ep$ if the perturbation is weak,
	\begin{equation}\label{eq:Fexpansion}
	F=1+\ep  F^{(1)} + \ep^2 F^{(2)} +\ldots.
	\end{equation}
	It has previously been shown that $ F^{(1)}=0$ whenever $H_S=0$ and $\ket \psi$ in a DFS, which leaves open the possibility that $ F^{(1)}\neq 0$ when $H_S\neq 0$, even though $\ket \psi$ is in a DFS \cite{bacon1999robustness}. 
	
	However, $ F^{(1)}=0$ in all cases and all times, even without assuming $\ket \psi$ to be in a DFS. This is a direct consequence of the following theorem, together with the fact that $\rho_S(\ep,t)$ is analytic in $\ep$ at $\ep=0$ for all $t$, as is clear from eq. (\ref{eq:APrime}).

    \begin{theorem}
	Let $\{\si(\ep)\}$ be a family of finite-dimensional density matrices that is analytic at $\ep=0$, and let $F[\si(0),\si(\ep)]$ denote the fidelity between $\si(0)$ and $\si(\ep)$. Then $ F^{(1)}\equiv\frac{d}{d\ep} F[\si(0),\si(\ep)]|_{\ep=0}=0$.
	\end{theorem}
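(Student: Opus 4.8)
The plan is to avoid differentiating the matrix square roots directly and instead to exploit the \emph{extremal} property that defines the fidelity. Recall the standard fact (e.g.\ from the cited \cite{nielsen2010quantum, jozsa1994fidelity}) that $0\le F(\si,\eta)\le 1$ for any pair of density matrices, with equality $F(\si,\eta)=1$ if and only if $\si=\eta$. Applied to the given family, the real-valued function $\ep\mapsto F[\si(0),\si(\ep)]$ is then bounded above by $1$ for all $\ep$, and it \emph{achieves} this bound at $\ep=0$, since $\si(\ep)|_{\ep=0}=\si(0)$ gives $F[\si(0),\si(0)]=1$. Thus $\ep=0$ is a global maximizer of $F$ along the family.

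Given this, the conclusion reduces to a one-line first-order condition: provided $F$ is differentiable at $\ep=0$, a real function attaining an interior maximum there must be stationary, so $F^{(1)}=\tfrac{d}{d\ep}F[\si(0),\si(\ep)]|_{\ep=0}=0$. Crucially, this reasoning never inspects the structure of $\si(\ep)$, and in particular never assumes $\ket{\psi}$ lies in a DFS; it uses only $F\le1$ with equality at coincidence. This is exactly why, as anticipated in the main text, the vanishing of the linear term should be regarded as built into the \emph{definition} of the dynamical fidelity rather than as a special property of DFSs.

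The one nontrivial ingredient, and the main obstacle, is therefore the differentiability of $F$ at $\ep=0$, which is where the analyticity hypothesis on $\{\si(\ep)\}$ is used. Writing $F[\si(0),\si(\ep)]=\big[\tr\sqrt{M(\ep)}\,\big]^2$ with $M(\ep)=\sqrt{\si(0)}\,\si(\ep)\,\sqrt{\si(0)}$, the matrix $M(\ep)$ is analytic in $\ep$ because $\si(\ep)$ is, but the principal square root fails to be analytic where $M$ is singular, and $M(0)=\si(0)^2$ is rank-deficient whenever $\si(0)$ has a kernel (as for a pure state). I would address this by decomposing with respect to the support and kernel of $\si(0)$: on the support $M(0)$ is strictly positive, so $\sqrt{M(\ep)}$ is analytic there, while one must separately verify that the block associated with the vanishing eigenvalues contributes no nonsmooth piece to $\tr\sqrt{M(\ep)}$ near $\ep=0$. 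This is the technical content deferred to appendix \ref{sec:fidelity}; once it establishes that $F$ is at least once differentiable at $\ep=0$, the extremal argument above immediately upgrades this to $F^{(1)}=0$.
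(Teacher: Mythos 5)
Your proposal is correct and is essentially the paper's own argument: the paper likewise deduces $F^{(1)}=0$ from the facts that $0\le F\le 1$, that $F[\si(0),\si(0)]=1$ makes $\ep=0$ an interior maximum, and that $F$ is differentiable there. The differentiability issue you isolate is handled in appendix \ref{sec:fidelity} exactly along the lines you sketch --- sandwiching by $\sqrt{\si(0)}$ kills the kernel block so the problem reduces to the support of $\si(0)$, where Kato's perturbation theorem gives analyticity of the eigenvalues of $M(\ep)$ and hence of $\tr\sqrt{M(\ep)}$.
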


	\begin{proof}
    $F[\si(0),\si(\ep)]$ is analytic in $\ep$ at $\ep=0$. Because
    $0\leq F \leq 1$ for any real $\ep$, and $F[\si(0),\si(0)]=1$, it follows that $ F^{(1)}$ must always vanish.
    \end{proof}
    
The theorem also follows from the connection of the fidelity to the Bures metric tensor \cite{hubner1992explicit,paris2009quantum}. We elaborate more on this relation in appendix \ref{sec:bures}. This relation of the theorem to the robustness of DFSs, which we make clear by the elementary considerations above, was not noticed before. It is opposite to previous suggestions that continue to  proliferate in the literature \cite{lidar1998decoherence, bacon1999robustness, lidar2003decoherence, altepeter2004experimental, gorin2006dynamics, doll2006limitation, schlosshauer2007decoherence, schlosshauer2014quantum}.
    
The theorem also applies when time evolution is generated by a Lindbladian $\mc L$. A perturbation of its Lindblad operators \cite{zanardi2015geometry, albert2016geometry} results in $\mc L\rightarrow \mc L(\ep)=\mc L+\ep \mc L' + \ep^2 \mc L''$ for some linear superoperators $\mc L'$ and $\mc L''$. Thus in this setting $\rho_S(\ep,t)=e^{t(\mc L+\ep \mc L' + \ep^2 \mc L'')}\rho_{S,\mrm{init}}$ where  $\rho_{S,\mrm{init}}$ is the initial state at $t=0$. Since $\mc L(\ep)$ and the exponential map are analytic, $\rho_S(\ep,t)$ is analytic in $\ep$ at $\ep=0$ for all $t\geq 0$. Thus $ F^{(1)}=0$ also when time evolution is generated by a Lindbladian, not only for states in a DFS or noiseless subsystems \cite{zanardi2001stabilization}, but for any state. More background is given in appendix \ref{sec:lindblad}.  

We now return to the OSR, and consider $F^{(2)}$. We stress that now we do assume $\ket\psi$ to be in a DFS. Combining (\ref{eq:KrausMap}), (\ref{eq:APrime}), and (\ref{eq:Fsimple}),  we find  $F^{(1)}= \sum_i\bra{\psi} A^{(0)\dagger}_i(t) A^{(1)}_i(t)+A^{(1)\dagger}_i(t) A^{(0)}_i(t) \ket{\psi}$
	and $F^{(2)}=\sum_i\bra\psi A^{(0)\dagger}_i(t) A^{(2)}_i(t)+A^{(2)\dagger}_i(t) A^{(0)}_i(t)\ket{\psi} +|\bra\psi e^{itH_S}A^{(1)}_i(t)\ket\psi|^2$. At this point it seems that $F^{(1)}\neq 0$. By condition (\ref{eq:normalizationOne}) however, it follows that $F^{(1)}=0$. The second condition (\ref{eq:normalizationTwo}) is crucial in obtaining a concise expression for $F^{(2)}$, as it can be used to eliminate $A^{(0)}_i(t)$ and $A^{(2)}_i(t)$. This yields  
	\begin{equation}\label{eq:F2}
	F^{(2)}=-\sum_i \si^2_\psi[U^\dagger(t) A^{(1)}_i(t)],
	\end{equation}
	with $U(t)=e^{-itH_S}$, and $\si^2_\psi[O]\equiv \bra{\psi}O^\dagger O^\nodagger \ket\psi-|\bra\psi O \ket\psi|^2$. Eq. (\ref{eq:F2}) describes the effect of a perturbation to the Kraus operators on the dynamics of states in a DFS. The entire procedure above can be straightforwardly extended to higher orders in $\ep$.
	
	\section{Susceptibility} We now consider the short-time expansion of $F^{(1)}$. The first nonvanishing term is proportional to $t^2$. We define the proportionality constant $\chi$ (with an extra minus sign) as \emph{the dynamical fidelity susceptibility of DFSs}. That is, $\chi=-\frac{1}{4}\frac{\del^2}{\del \ep^2} \frac{\del^2 F}{\del t^2}|_{\ep,t=0}$, so that
	\begin{equation}\label{eq:Fchi}
	F=1-\chi\, \ep^2 t^2 +O(\ep^2t^4).
	\end{equation}
	This is not yet a computation but only a definition. To obtain an expression for $\chi$ involving $H$, note that in general, the perturbing Hamiltonian can be written as $V=\sum_\al S_\al \otimes B_\al$.  We substitute $A^{(1)}_i(t)=\bra{\varphi_i}-itV \ket{\varphi_0}+O(t^2)$ into (\ref{eq:F2}) and collect terms proportional to $\ep^2t^2$. Using the (connected) system correlation function $\mathbf S$ with matrix elements $\mathbf S_{\al\be}=\bra{\psi} S_\al^\dagger S_\be^\nodagger \ket \psi - \bra\psi  S_\al^\dagger \ket\psi \bra\psi  S_\be^\nodagger \ket\psi$ and the bath correlation function $\mathbf B_{\al\be}=\bra{\varphi_0} B_\al^\dagger B_\be^\nodagger \ket{\varphi_0}$, the result can be written as
	\begin{equation}\label{eq:hamiltonianResult}
	\chi=\tr(\mathbf B \mathbf S^T). 
	\end{equation}
Here the trace is not over $\HH_S$ or $\HH_B$ but over the indices of the correlation functions. When $V$ is a simple tensor product, $V=S\otimes B$, this reduces to $\chi=\bra{\varphi}B^2\ket{\varphi}\si^2_\psi [S]$. 

Eq. (\ref{eq:hamiltonianResult}) assumes the initial system state to be in a DFS, but does depend directly on $H_{SB}$. So in particular, it holds for $H_{SB}=0$, in which case the DFS is all of $\HH_{S}$. Thus eq. (\ref{eq:hamiltonianResult}) can be used outside of the context of DFSs to study the effects of perturbative system-bath coupling as long as there is no initial system-bath coupling.
	
 Mathematically, the only restriction on $V$ is its Hermiticity. For $S$ a qubit register with $n$ qubits, any $V$ may be written as $V=\sum_\al c^\al P^\al \otimes B^\al$, with $c^\alpha$ real, and $P^\al$ an element of the Pauli group $\{ \id, \si^x,\si^y,\si^z\}^{\otimes n}$. In this form there are at most $4^{n}=2^{2n}$ linearly independent terms. Under the assumption that adding a qubit does not change how the former qubits couple to the bath, we have that $c^\al$ and $B^\al$ do not depend on $n$. It then follows from eq. (\ref{eq:hamiltonianResult}) that $\chi=O(2^{4 n})$. Now consider the physical restriction that $V$ acts $k$-locally on the system, which means that every $S^\al$ acts on no more than $k$ qubits, with $k$ independent of $n$ \cite{feynman1982simulating, lloyd1996universal}. Then $V$ contains $O(n^k)$ terms. By eq. (\ref{eq:hamiltonianResult}) it thus follows that $\chi=O(n^{2k})$. 
 
	\section{Two examples} 
	Here we calculate $\chi$ explicitly in two examples. Although $\chi$ does not depend on the unperturbed Hamiltonian, we describe possible unperturbed Hamiltonians to give physical context.
	
	For the first example, consider the DFS that is currently used in ion-trap quantum computers \cite{kielpinski2001decoherence, kielpinski2002architecture}. The register-environment model is that of pure collective dephasing \cite{palma1996quantum}, which is the main source of decoherence for unencoded quantum states in this setup \cite{monz201114-qubit}. The coupling term reads
	$H_{SB}= S^z \otimes \sum_k (g_k a_k+ g_k^*a_k^\dagger)$ with $S^z=\sum_{i=1}^n \si^z_i$ the $z$-component of the total spin operator. Here $g_k$ is the register-environment coupling strength, $a_k$ $(a_k^\dagger)$ the annihilation (creation) operator of an electromagnetic mode with wavenumber $k$ and polarization along the $z$-axis, $n$ the number of physical qubits, and $\si^z_i$ the Pauli $z$-operator that only acts on qubit $i$. We assume the qubits to lie along a line, equidistantly separated by distance $d$. There is no spatial dependence of the coupling strengths, and thus all modes that are summed over are assumed to be of long wavelength compared to the total size of the quantum register (i.e. $1/k\gg n d$). Using two physical qubits $(n=2)$, one logical qubit is protected from the decohering influence of $H_{SB}$ by encoding it in the DFS spanned by the logical states $\ket{\bar 0}=\ket{01}$ and $\ket{\bar 1}=\ket{10}$. For $n>2$ even, the qubits are paired, and each pair encodes one logical qubit. The GHZ state is highly nonclassical and known to be highly sensitive to the environment, which is why it is used in quantum metrology \cite{giovannetti2004quantum} and as a probe for the  preservation of coherence \cite{monz201114-qubit}. It can be protected by encoding it as
	 $\ket\psi=(\ket{\bar 0}^{n/2}+\ket{\bar 1}^{n/2})/\sqrt{2}$.
	
	We perturb the model by adding a bosonic mode that couples to the staggered magnetic moment of the system. This corresponds to an  electromagnetic mode with wavelength $\pi/d$ (in units where $c=1$) coupling locally to the individual spin operators, $\ep V=\ep S^{\rm {stag}}\otimes(a_{\pi/d}^\nodagger+a_{\pi/d}^\dagger)$,  where $S^{\rm{stag}}=\sum_{i=1}^n (-1)^i\si^z_i$. We take the state of the perturbing mode to be the vacuum, that is, the state $\ket{\varphi_0}$ such that $a_{\pi/d}\ket{\varphi_0}=0$. (The state of the other modes is irrelevant, see eq. (\ref{eq:hamiltonianResult}).) This state is chosen because it forms a best-case scenario; the thermal bath can at best be at zero temperature. The computation is not more involved when the thermal or number state is assumed. With all definitions in place, we can directly apply formula (\ref{eq:hamiltonianResult}), to find
	\[
	\chi=n^2. 
	\]
	This example saturates the bound on the system size scaling for a completely local noise model.
	
	For our second example, we consider a more general model that, next to dephasing, includes protection against collective absorption and emission of radiation  \cite{zanardi1997noiseless, lidar2003decoherence}. To the best of our knowledge, at the moment this DFS is not used in quantum computers. The coupling term in the unperturbed Hamiltonian reads $H_{SB}=
	\sum_k[g_k S^+a_k+f_k S^-a_k^\dagger+S^z (h_k a_k + h_k^*a_k^\dagger)]$ (with tensor products omitted). Here $S^\pm=\sum_{i=1}^n \si_i^\pm$ excites (relaxes) the system collectively, with $\si_i^\pm=\si_i^x\pm i \si_i^y$ a combination of Pauli operators. Other symbols are defined as before. For four qubits, two logical states that span a DFS that protects against $H_{SB}$ are $\ket{\bar 0}=\ket s \otimes \ket s$, with $\ket s=(\ket{01}-\ket{10})/\sqrt{2}$ a two-qubit singlet state, and $\ket{\bar 1}=(\ket{t_1 t_{-1}}+\ket{t_{-1}t_1}-\ket{t_0t_0})/\sqrt{3}$ a combination of triplet states, with $\ket{ t_{-1}}=\ket{11}$, $\ket{t_0}=(\ket{01}+\ket{10})/\sqrt{2}$ and $\ket{t_{1}}=\ket{00}$. The state we consider here is similar to that in the first example, $\ket \psi=(\ket{\bar 0}^{n/4}+\ket{\bar 1}^{n/4})/\sqrt{2}$. It is in the DFS of $H_{SB}$ for $n\geq 4$ a multiple of 4. It is an encoded GHZ state when the larger DFS is constructed by simple concatenation of single logical qubit DFSs, like in the previous example, but other methods exist \cite{zanardi1997noiseless, lidar2003decoherence}.
	
	As the perturbation, we again consider a staggered field, with $\ep V$ as in the previous example. Also, we assume the perturbing mode to be in the vacuum state. Using (\ref{eq:hamiltonianResult}), a computation shows that
	\[
	\chi=\frac{4}{3}n,
	\]      
	for $n>4$ a multiple of 4. (For $n=4$ the prefactor is different.)

	\section{Generalization to all orders in time}\label{sec:interaction}  Here we generalize our approach to obtain an expression for $F^{(2)}$, as defined in eq. (\ref{eq:Fexpansion}). To do so, we go to the interaction picture, denoted by the superscript $I$. (If there is no superscript denoting the picture the Schr\"odinger picture is always assumed.) In the interaction picture, the initial $S+B$ state at $t=0$, which is equal in any picture, evolves as $\rho^I_{SB}(\ep,t)=U^I(t)\rho_{SB,\mrm{init}}U^{I\dagger}(t)$, with $U^I(t)=e^{itH_0}e^{-itH}=T e^{-i\ep\int_0^t \mrm dt'\, H^I(t')}$. The operator $U^I(t)$ depends also on $\ep$ but this notation is suppressed in $U^I(t)$ and its dependencies. Here $T$ is the time-ordering operator and $H^I(t)$ is the interaction picture Hamiltonian $\ep H^I(t)=e^{itH_0}\ep V e^{-itH_0}$. As before, the Schr\"odinger picture operator $H_0$ contains the system and bath Hamiltonians, and the original coupling $H_{SB}$, against which the system state is protected by the DFS. The perturbed Hamiltonian $H$ contains an extra perturbation $\ep V$ which causes the system state to decohere.
	
	Assuming, as before, that $\rho_{SB,\mrm{init}}=\ket \psi \!\bra \psi \otimes \ket{\varphi_0}\! \bra{\varphi_0}$, and that $\ket \psi$ is in a DFS, we find the dynamical fidelity equals
	\begin{equation}\label{eq:FsimpleInteraction}
	F=\bra \psi \rho^I_S(\ep,t)\ket \psi,
	\end{equation}   
	with $\rho^I_S(\ep,t)\equiv\tr_B[\rho^I_S(\ep,t)]$ the interaction picture system state. Note that, like the expectation value of operators, the fidelity is invariant under change of picture even though states and operators are not. 
	
	The state $\rho^I_S(\ep,t)$ can be expressed as
	\begin{equation*}
	\rho^I_S(\ep,t)=\sum_i A_i^I(\ep,t) \ket{\psi}\!\bra{\psi} A_i^{I\dagger}(\ep,t),
	\end{equation*}
	using the \emph{interaction picture Kraus operators},
	
	\begin{equation*} 
	A_i^I(\ep,t)=\bra{\varphi_i} U^I(t) \ket{\varphi_0}.
	\end{equation*}
	These operators can be expanded using the Dyson series $U^I(t)=\id -i \ep \int_0^t \mrm dt'\, H^I(t')-\frac{\ep^2}{2}T \int_0^{t} \mrm dt' \mrm dt'' \, H^I(t')H(t'')+\ldots$, which yields,
	\begin{equation*}
	A_i^I(\ep,t)=A_i^{I(0)}(t)+ \ep A_i^{I(1)}(t)+\ep^2 A_i^{I(2)}(t)+ \ldots,
	\end{equation*}
	where now $A_i^{I(0)}(t)=A_i^{I(0)}=\braket{\varphi_i}{\varphi_0}$, and
	\begin{equation}\label{eq:A(1)Interaction}
	A_i^{I(1)}(t)=-i\int_0^t \mrm dt'\, \bra{\varphi_i}H^I(t')\ket{\varphi_0}.
	\end{equation}
	A similar expression holds for $A_i^{I(2)}(t)$, but it is of no interest here because it is eliminated by using the normalization conditions (\ref{eq:normalizationOne}), (\ref{eq:normalizationTwo}) in their interaction form, which amounts to putting a superscript $I$ everywhere.
	
	Comparing the expression for $F$ in the Schr\"odinger picture (\ref{eq:Fsimple}) to that in the interaction picture (\ref{eq:FsimpleInteraction}), we see they are essentially equal. The difference is that, in the interaction picture, the extra factor $U(t)$ is absent, and that the state is not $\rho_S(\ep,t)$ but $\rho_S^I(\ep,t)$. Since we have similar expressions for these states in terms of the (interaction picture) Kraus operators, it is straightforward to show that
	\begin{equation}\label{eq:F2interaction}
	F^{(2)}=-\sum_i \si^2_\psi[ A^{I(1)}_i(t)],
	\end{equation}
	with $\si^2_\psi$ defined as before. Of course $F^{(2)}$ itself is invariant under change of picture, it is just the expression that changes form. Also note the absence of $A^{I(2)}_i(t)$ and thus any time ordering. This is due to the normalization conditions (\ref{eq:normalizationOne}), (\ref{eq:normalizationTwo}) in their interaction form. Equation (\ref{eq:F2interaction}) says that the change of fidelity in a DFS, due to an extra system-bath coupling $\ep V$, is proportional to the sum of the auto-correlation functions of the interaction-picture Kraus operators. 
	
	We can gain further insight into the change of fidelity by studying how the $A_i^{I(1)}(t)$ depend on the specific system and bath operators appearing in $V=\sum_\al S_\al \otimes B_\al$. To do so we \emph{define} $S^I_\al(t)$ and $B^I_\al(t)$ by 
	\begin{equation}\label{eq:HI}
	H^I(t)=\sum_\al S^I_\al(t) \otimes B^I_\al(t).
	\end{equation}Given any $H^I(t)$ (or equivalently any $V$), such $S^I_\al(t)$ and $B^I_\al(t)$ can always be found. To see why this is true in principle (depending on the situation much easier methods may exist), note that we can always choose an operator basis for the Hermitian operators on $\HH_{SB}$, so that $H^I(t)=\sum_{ab} h_{ab}(t) \sigma_a \otimes \sigma_b$ with $\{\si_a\},\ \{\si_b\}$ operator bases for $S$ and $B$ respectively, and $\{h_{ab}(t)\}$ real functions of $t$. Then lists of $S_\al^I(t)$ and $B_\al^I(t)$ can be defined so that the sum in (\ref{eq:HI}) has a single index. Note that only in the case that $H_{SB}=0$, from which it follows that $e^{itH_0}=e^{itH_S}e^{itH_B}$, we may choose $S^I_\al(t)=e^{itH_S}S_\al e^{-itH_S}$ and $B^I_\al(t)=e^{itH_B}S_\al e^{-itH_B}$.
	
	Plugging (\ref{eq:HI}) into (\ref{eq:A(1)Interaction}), and the result into (\ref{eq:F2interaction}), we find the generalization of equations (\ref{eq:Fchi}),(\ref{eq:hamiltonianResult}),
	\begin{equation}
	F=1-\ep^2 \int_0^t \mrm dt' \mrm dt''\, \tr[\mathbf B(t',t'')\mathbf S^T(t',t'')]+\ldots,
	\end{equation}
	with correlation functions
	\begin{align*}
	\mathbf B_{\al\be}(t',t'')=& \bra{\varphi_0} B_\al^{I\dagger}(t') B_\be^{I}(t'')\ket{\varphi_0}, \\
	\mathbf S_{\al\be}(t',t'')=&\bra{\psi} S_\al^{I\dagger}(t') S_\be^{I}(t'') \ket \psi\\
	&- \bra\psi S_\al^{I\dagger}(t') \ket\psi \bra\psi  S_\be^{I}(t'') \ket\psi.
	\end{align*}

\section{Conclusion} Using the dynamical fidelity, we quantify the behavior of DFSs under perturbations of the system-bath interaction. The response to perturbations is of second order, the strength of which  we define as the dynamical fidelity susceptibility. It does not depend on the unperturbed system-bath interaction, so to leading order, states in a DFS respond to perturbations as if there were no unperturbed coupling. Our expressions are applicable outside the context of DFSs whenever the perturbation is the only system-bath interaction.

Instead of the robustness or stability of DFSs, we put forward the scaling of the susceptibility with the system size to assess the value of DFSs. For general perturbations, the susceptibility is upper bounded by an exponential in the system size. However, under the restriction of $k$-locality, the upper bound is polynomial. Therefore, DFSs can be considered scalable in theory. It remains to be shown that perturbations can be made sufficiently weak and uncorrelated to allow practical use of DFSs in large-scale quantum computers.

By identifying the `good' DFSs,  the susceptibility is a tool to increase coherence times. Our quantitative results can be generalized to arbitrary system states, and to more general forms of passive error correcting, such as noiseless subsystems. They can also be adjusted to yield the average-case susceptibility or the worst-case susceptibility.

\textbf{Acknowledgments.}
We thank V. Gritsev, W. Buijs\-man and M. Walter for useful discussion, and P. Zanardi for pointing out the connection to the Bures metric. JvW acknowledges support from a VIDI grant financed by the Netherlands Organization for Scientific Research (NWO).


%


\clearpage
\appendix

\section{Analyticity of the Fidelity}\label{sec:fidelity}
Here we prove a lemma concerning the fidelity
\[
F(\rho,\si)=\left[\tr\sqrt{\sqrt{\rho}\si\sqrt{\rho}}\right]^2. 
\]
Note that in the following, we do not assume $\rho$ or $\si$ to be in a DFS.
\begin{lemma}\label{th:analyticity}
	Let $\{\si(\ep)\}$ be a family of finite-dimensional density matrices that is analytic at $\ep=0$. Then the fidelity $F[\si(0),\si(\ep)]$ is analytic at $\ep=0$. 
\end{lemma}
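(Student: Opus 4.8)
The plan is to reduce the claim to the analyticity of the matrix principal square root on the cone of positive-definite matrices, the only genuine obstruction being the branch point of $\sqrt{\cdot}$ at the origin, which I will dispose of by restricting to the support of $\si(0)$.

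First I would set $M(\ep)\equiv\sqrt{\si(0)}\,\si(\ep)\,\sqrt{\si(0)}$, so that $F[\si(0),\si(\ep)]=\big(\tr\sqrt{M(\ep)}\big)^2$. Since $\si(0)$ is fixed, $\sqrt{\si(0)}$ is a constant Hermitian matrix, and because $\ep\mapsto\si(\ep)$ is analytic at $\ep=0$ entrywise, the matrix $M(\ep)$ is analytic at $\ep=0$ as well; it is positive semidefinite for real $\ep$ since $\si(\ep)\geq 0$. Tracing and squaring manifestly preserve analyticity, so everything comes down to showing that $\ep\mapsto\tr\sqrt{M(\ep)}$ is analytic at $\ep=0$.

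The essential difficulty is that $\si(0)$ may be rank-deficient, in which case $M(0)=\si(0)^2$ has a zero eigenvalue and $\sqrt{\cdot}$ has a branch point there. I would remove this by noting that $\sqrt{\si(0)}=P\sqrt{\si(0)}=\sqrt{\si(0)}P$, where $P$ projects onto $\mathrm{supp}\,\si(0)=\mathrm{ran}\,\sqrt{\si(0)}$; hence $M(\ep)=PM(\ep)P$ for every $\ep$, so $M(\ep)$ vanishes identically on $(\id-P)\HH$ and no eigenvalue can emerge from zero as $\ep$ varies. Restricting to $\mathrm{ran}\,P$, the matrix $\tilde M(\ep)\equiv M(\ep)|_{\mathrm{ran}\,P}$ satisfies $\tilde M(0)=\big(\si(0)|_{\mathrm{ran}\,P}\big)^2$, which is strictly positive definite, and $\tr\sqrt{M(\ep)}=\tr\sqrt{\tilde M(\ep)}$.

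Finally, for $\ep$ in a complex neighborhood of $0$, continuity keeps the spectrum of $\tilde M(\ep)$ in a small neighborhood of the positive reals, hence off the cut $(-\infty,0]$. The principal square root is then given by the holomorphic functional calculus, $\sqrt{\tilde M(\ep)}=\tfrac{1}{2\pi i}\oint_\Gamma \sqrt{z}\,(z\,\id-\tilde M(\ep))^{-1}\,dz$, with $\Gamma$ a fixed contour encircling the spectrum and avoiding the cut; since the resolvent is a rational, hence analytic, function of the analytic entries of $\tilde M(\ep)$ whenever $z\notin\mathrm{spec}\,\tilde M(\ep)$, the contour integral, its trace, and its square are all analytic at $\ep=0$. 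Equivalently, one may invoke analytic (Rellich--Kato) perturbation theory to choose the eigenvalues $\mu_j(\ep)$ of $\tilde M(\ep)$ analytically and observe that each stays bounded away from $0$, so that $\tr\sqrt{\tilde M(\ep)}=\sum_j\sqrt{\mu_j(\ep)}$ is a finite sum of compositions of analytic functions. The main obstacle is precisely this zero-eigenvalue branch point; once the support argument confines the problem to a region where $\tilde M(\ep)$ is uniformly positive definite, the remaining analyticity is routine.
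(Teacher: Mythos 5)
Your proposal is correct, and its skeleton coincides with the paper's: both reduce $F$ to $\bigl(\tr\sqrt{M(\ep)}\bigr)^2$ with $M(\ep)=\sqrt{\si(0)}\,\si(\ep)\,\sqrt{\si(0)}$ analytic, and both neutralize the only genuine obstruction --- the branch point of the square root at $0$ when $\si(0)$ is rank-deficient --- by restricting to the support of $\si(0)$, where the restricted $M(0)$ is strictly positive definite. The difference lies in how you finish. The paper diagonalizes: it invokes Kato's theorem on analytic perturbation of eigenvalues of the Hermitian analytic family $\tilde M(\ep)$, writes $F=\bigl[\sum_i\sqrt{a_i(\ep)}\bigr]^2$, and uses $a_i(0)>0$ to keep each $\sqrt{a_i(\ep)}$ analytic near $\ep=0$. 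Your primary route instead uses the Riesz--Dunford holomorphic functional calculus, $\sqrt{\tilde M(\ep)}=\frac{1}{2\pi i}\oint_\Gamma \sqrt{z}\,\bigl(z\,\id-\tilde M(\ep)\bigr)^{-1}\,dz$, with a fixed contour around the uniformly positive spectrum. This is a genuinely different, and arguably more robust, closing argument: it needs no eigenvalue-selection theorem at all --- which matters, because analytic eigenvalue families are a special feature of Hermitian one-parameter perturbations, and your $\tilde M(\ep)$ ceases to be Hermitian once $\ep$ is allowed to be complex --- and it delivers the complex-analytic extension directly, with analyticity of $F$ on the real axis following because the principal square root agrees there with the positive semidefinite square root. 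Your parenthetical Rellich--Kato alternative is precisely the paper's argument, so your write-up in fact subsumes it; the functional-calculus version is the one I would keep.
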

\begin{proof}
	Since $\si(\ep)$ is analytic we may expand it as a power series, $\si(\ep)=\si^{(0)}+\ep \si^{(1)}+\ep^2 \si^{(2)}+\ldots$, where the $\si^{(i)}$ are constant and finite. Suppose $\si^{(0)}$ is given as an $N\times N$ matrix, and let  $\{p_1,\ldots p_m\}$, with $1\leq m \leq N$, be its (not necessarily distinct) nonzero eigenvalues. There exists a basis in which $\si^{(0)}=\mrm{diag}(p_1,\ldots,p_m,0,\ldots,0)$. Naturally, in this basis, $\sqrt{\si^{(0)}}=\mrm{diag}(\sqrt{p_1},\ldots,\sqrt{p_m},0,\ldots,0)$. Note that this is a projector onto the nonzero eigenspace of $\si^{(0)}$. Thus 
	\begin{align*}
	F[\si(0),\si(\ep)]&=\left[\tr\sqrt{(\si^{(0)})^2+\ep\sqrt{\si^{(0)}}\si^{(1)}\sqrt{\si^{(0)}}+\ldots}\right]^2\\
	&\equiv\left[\tr\sqrt{M(\ep)}\right]^2,
	\end{align*}
	where $M(\ep)=M^{(0)}+\ep M^{(1)}+\ldots$, with $M^{(0)}=\mrm{diag}[(p_1)^2,\ldots,(p_m)^2]$. Here we have used the fact that all matrices in the expansion of $\si(\ep)$ are projected onto the zero-eigenspace of $\si^{(0)}$ so that we can reduce the dimension of the matrix under the square root. Thus the $M^{(i)}$ are constant matrices of dimension $m\times m$ (as opposed to $N\times N$), and $M(\ep)$ is Hermitian and analytic. Denote the set of eigenvalues of $M(\ep)$ by $\{a_i(\ep)\}_{i=1}^m$. It follows from theorem 6.1 in Kato (1966) \cite{kato1966perturbation} that the $a_i(\ep)$ are analytic. Since, furthermore, $a_i(0)>0$, there exist a $\delta>0$ such that $a_i(\ep)>0$ for all $\ep$ in the domain $D=(-\delta,\delta)$. In other words, $M(\ep)$ is positive definite and analytic on the domain $D$. Thus the eigenvalues of $\sqrt{M(\ep)}$ are given by $\{\sqrt{a_i(\ep)}\}_{i=1}^m$, which are again all analytic on $D$. Therefore
	\begin{equation}\label{eq:FinEigenvalues}
	F[\si(0),\si(\ep)]=\left[\sum_{i=1}^m \sqrt{a_i(\ep)}\right]^2
	\end{equation}
	is analytic around $\ep=0$. 
\end{proof}

\section{Perturbing a Lindbladian}\label{sec:lindblad} Here we show that there is no term proportional to $\ep$ in the dynamical fidelity $F$ (as defined in the main text) when time evolution is generated by a Lindbladian $\mc L$, without assuming the initial state to be in a DFS.  Lindblad evolution is often used in the context of infinite baths. where excitations are carried away quickly so that they do not back-react on the system. In the Lindblad-setting, or the `semigroup master equation', $\rho_S(t)=e^{t\mc L}\rho_{S,\mrm{init}}$, with 
\begin{align}\label{eq:Lindblad}
\mc L[\rho]&=-i[H,\rho]+\mc D[\rho],\\
\mc D[\rho]&=\sum_{k}\gamma_k\left(L_k\rho L_k^\dagger-\half\{L_k^\dagger L_k,\rho\}\right).\nn
\end{align}
Here $\{\cdot,\cdot\}$ is the anti-commutator, and $\gamma_k>0$. The Lindblad-operators $L_k$ are bounded linear operators on the system's Hilbert space $\mc H_S$, which is of dimension $N$. They do not obey any special relations; the Lindblad equation (\ref{eq:Lindblad}) induces a trace-preserving and completely positive map by design. Usually the $L_k$ are assumed to be orthonormal, but this is not necessary. 

Note that since $\mc L$ is a linear superoperator that acts on density matrices, it may be represented as an $N^2\times N^2$ matrix that acts on the vector $\rho \cong (\rho_{00}, \rho_{01},\ldots, \rho_{NN})^T$, with $\rho_{ij}=\bra{i} \rho \ket {j}$.

We now perturb time evolution by
\begin{align*}
H &\rightarrow H+\ep V,\\
L_k &\rightarrow L_k+\ep L'_k.   
\end{align*}
The result is $\mc L \rightarrow \mc L+\ep \mc L' + \ep^2 \mc L''$ for some finite, constant linear superoperators $\mc L'$ and $\mc L''$. The exponential map of an analytical matrix is analytical. When we see the $\mc L,\,\mc L'$ and $\mc L''$ as matrices, it is thus evident that $\rho_S(\ep,t)=e^{t(\mc L+\ep \mc L' + \ep^2 \mc L'')}\rho_{S,\mrm{init}}$ is analytical  in $\ep$ at $\ep=0$. It is then a direct consequence of the theorem in the main text that $F^{(1)}=0$ also in the Lindblad-setting. 

\section{Alternative derivation of \texorpdfstring{$F^{(1)}=0$}{F(1)=0}}
Here we give an alternative proof to the theorem in the main text in the case that the analytic family under consideration is obtained by a perturbation. Strictly speaking this proof is redundant because a proof was already given in the main text. Nevertheless, the proof here is much more instructive. This is because it shows explicitly how the normalization conditions play a crucial role. Furthermore, it may act as a stepping stone for a more general result; in order to calculate $F^{(2)}$ for general $\ket \psi \in \HH_S$, thus obtaining a generalization valid also for states outside a DFS, essentially the same steps need to be followed as in the following derivation. 

To calculate $F^{(1)}$ explicitly, we adopt the notation from the proof of lemma \ref{th:analyticity} and continue from eq. (\ref{eq:FinEigenvalues}). We consider the time $t\geq 0$ here as fixed, and will drop the notation of $t$. The first order correction to the eigenvalues $a_i(0)$ can be found using standard perturbation theory. Note, however, that in the standard setting one is interested in the corrections to the eigenvalues of a Hamiltonian. Here we are interested in corrections to the eigenvalues of $M^{(0)}$, which is, like a Hamiltonian, a Hermitian linear operator. Note that, in connection to the notation in the proof of  \ref{th:analyticity}, we are now using the explicit states $\si(0)=\rho_S(0,t)=\rho_S(0)$ and $\si(\ep)=\rho_S(\ep,t)=\rho_S(\ep)$. Thus, by standard perturbation theory,
\begin{align*}
a_i(\ep)&=a_i(0)+\ep \bra{i} \sqrt{\rho^{(0)}_S}\rho^{(1)}_S\sqrt{\rho^{(0)}_S}\ket{i}+\ldots\\
&=a_i(0)+\ep\, p_i \bra i \rho^{(1)}_S \ket i+\ldots,
\end{align*}
where 
\[
\rho^{(1)}_S=\sum_j\left(A_j^{(0)} \rho_{S,\mrm{init}}A_j^{(1)\dagger} +A^{(1)}_j \rho_{S,\mrm{init}}A_j^{(0)\dagger} \right),
\]
with $\rho_{S,\mrm{init}}$ the initial system state and, as before (but now using the specific density operator $\rho_S(\ep)$ ), $\rho_S(\ep)=\rho_S^{(0)}+\ep \rho^{(1)}+\ldots$. The system states $\{\ket i\}$ are the nonzero eigenvectors of $\rho_S(0)$ and are thus \emph{all} eigenvectors of $M^{(0)}$.  From the equations above, it follows that 
\begin{align}
F[\rho_S(0),\rho_S(\ep)]&=\left[\sum_{i=1}^m \sqrt{a_i(0)+\ep\, p_i \bra i \rho^{(1)}_S\ket i+\ldots}\right]^2 \nn \\
&=\left[\sum_{i=1}^m \left(p_i+\frac{\ep}{2}\, \bra i \rho^{(1)}_S\ket i+\ldots\right)\right]^2. \label{eq:Fexplicit} 
\end{align}
Again, it seems that $F^{(1)}\neq 0$.
Now either $\rho_S(0)$ is full rank or it is not full rank. Let us first assume it is full rank, that is, assume $m=N$ with $N=\mrm{dim}( \HH_S)$. Then by the normalization conditions in the main text, $\sum_{i=1}^m \bra i \rho^{(1)}_S\ket i=\tr\, \rho^{(1)}_S=0$. Therefore, in this case, $F^{(1)}=0$. Now assume that $ \rho_S(0)=\sum_j A_j^{(0)} \rho_{S,\mrm{init}} A_j^{(0)\dagger}$ is not full rank. We may write $\rho_S(0)=\sum_{k=1}^{m} \mathfrak p_k \ket k \! \bra k$, where $m<N$. We can expand the basis $\{\ket i\}$ to span all of $\HH_S$ (in practice this could be done by a Gram-Schmidt process), and write
\begin{align*}
\sum_{i=1}^m \bra i \rho^{(1)}_S\ket i &= \sum_{i=1}^N \bra i \rho^{(1)}_S\ket i- \sum_{i=m+1}^N \bra i \rho^{(1)}_S\ket i\\
&=- \sum_{i=m+1}^N \bra i \rho^{(1)}_S\ket i\\
&=- \sum_{i=m+1}^N\sum_j (\bra i A_j^{(0)} \rho_S(0) A_j^{(1)\dagger}\ket i+c.c.)\\
&=- \sum_{i=m+1}^N\sum_{j,k} \mathfrak p_k (\bra i A_j^{(0)} \ket k \! \bra k A_j^{(1)\dagger}\ket i+c.c.).
\end{align*}
Here $c.c.$ stands for the complex conjugate of the preceding term. 
For all $m+1\leq i \leq N$, we have by definition that $\bra i \rho^{(0)}_S \ket i=0$. Hence, for these $i$,
\begin{align*}
\bra i \sum_j A_j^{(0)} \rho_S(0) A_j^{(0)\dagger} \ket i&=\sum_{j,k}\mathfrak p_k\bra{i}A_j^{(0)}\ket k \! \bra k A_j^{(0)\dagger}\ket{i}\\
&=\sum_{j,k}\mathfrak p_k\,|\bra i A_j^{(0)} \ket k|^2=0.
\end{align*}
It follows that
\begin{align*}
\bra{i} A_j^{(0)} \ket k = 0
\end{align*}
for all $m+1\leq i < N$ and all $1\leq k \leq m$. Thus, combining the two cases (i.e. $\rho_S(0)$ full rank, $\rho_S(0)$ not full rank), we have
\begin{align*}
\sum_{i=1}^m \bra i \rho^{(1)}_S\ket i=0
\end{align*}
for all $1\leq m \leq N$. Therefore, by eq. (\ref{eq:Fexplicit}), $F^{(1)}=0$ for any $t$ and any perturbation to a quantum channel as defined in the main text, including perturbations obtained by perturbing the overall Hamiltonian.

\section{Scaling of \texorpdfstring{$\chi$}{chi} in the two examples}
In both examples in the main text, $V$ is completely local, but only in the first example the bound on $\chi$ for completely local perturbations is saturated. Even though $\chi$ scales polynomially with $n$ in both examples, the different powers can be an important distinction in practice. The difference can be traced back to the fact that, in the first example, both branches of the superposition that make up the encoded GHZ state are eigenstates of $\si^z_i$. That is, $\si^z_i\ket{\bar 0}^{n/2}=\pm \ket{\bar 0}^{n/2}$ and similarly for $\ket{\bar 1}^{n/2}$. This results in nonzero `inter-block cross terms' such as $\bra{\bar0}^{n/2}\si_i^z\si_j^z\ket{\bar 0}^{n/2}$, for $i,j$ belonging to a different pair of qubits. There are $O(n^2)$ of those terms, and thus $\chi$ scales with $n^2$. In contrast, in the second example, the states $\ket{\bar 0}^{n/4}$ and $\ket{\bar 1}^{n/4}$ are not eigenstates of $\si^z_i$. This leads to vanishing `inter-block cross terms', such as $\bra{\bar0}^{n/4}\si_i^z\si_j^z\ket{\bar 0}^{n/4}$ where $i,j$ belong to different groups of four qubits. When $i=j$, $\si_i^z\si_j^z=\id$. There are $O(n)$ of such terms, and hence $\chi$ scales as $n$. To gain further insight into the susceptibility, one could study whether there are general conditions on $V$ and $\ket\psi$ that can be used to determine the scaling of $\chi$ with $n$ a priori. We leave this for future investigation.  

\section{Relation between \texorpdfstring{$\chi$}{chi} and the Bures metric}\label{sec:bures}
The fidelity can be used to define a distance on the space of $N\times N$ density operators. This is the Bures distance \cite{hubner1992explicit, paris2009quantum}
\[
d_B^2(\rho,\si)=2(1-\sqrt{F(\rho,\si)}).
\]
In the main text we have computed $F=F[\rho_S(0,t),\rho_S(\ep,t)]$, which gives
\begin{align*}
d_B^2[\rho_S(0,t),\rho_S(\ep,t)]&=F^{(2)}(t)\,\ep^2+O(\ep^3)\\
&=[\chi t^2+O(t^3)]\ep^2+O(\ep^3).
\end{align*}
Thus $F^{(2)}(t)$ can be interpreted as (the only entry of) the pullback of the Bures metric tensor on the submanifold $\{\rho(\ep,t)\}_\ep$ at $\ep=0$, 
\[
\mrm d_B^2 (\rho_S(\epsilon,t),\rho(\epsilon+d \epsilon,t))|_{\epsilon=0}=F_2(t)\,\mrm d \epsilon^2=(\chi t^2+\ldots)\mrm d \epsilon^2.
\]
Here we have identified $\ep$ as $d \epsilon$. (We use `$d$' for infinitesimals and `$\mrm d$' for one-forms. Denoting the metric tensor by $\mrm d_B^2 [\rho(\epsilon),\rho(\epsilon+d \epsilon)]$, which is not the square of a one-form, is common notation.) Note that the expression above defines a family of metric tensors, one for every $t$. 

In this geometrical picture, $t$ itself is not a coordinate, like $\epsilon$, because we are never comparing $\rho(0,t)$ and $\rho(\ep,t)$ at different times. The Bures metric tensor being a metric tensor, it may seem obvious that there is no first order dependence of $F$ on $\ep=d\epsilon$. This is ultimately a consequence of the fact that the set of all $N\times N$ density matrices is a Riemannian manifold. However, such an argument requires the machinery of differentiable manifolds. Theorem 1 can be seen as giving a much simpler, more elementary proof that is easy to follow for readers not acquainted with these techniques. To the best of our knowledge, the connection between the pullback of the Bures metric and the `robustness' (i.e. the absence of a term proportional to $\ep$ in $F$) of DFSs, which is the important issue here, was in any case  unnoticed.

\end{document}